\theoremstyle{plain}
\newtheorem{theorem}{Theorem}
\begin{document}




\title{Recycled detection of genuine multiparty entanglement of  unlimitedly stretched array of parties and arbitrarily long series of sequential observers}



\author{Chirag Srivastava}
\affiliation{Harish-Chandra Research Institute, A CI of Homi Bhabha National Institute, Chhatnag Road, Jhunsi, Prayagraj 211 019, India}
\author{Mahasweta Pandit}
\affiliation{Institute of Theoretical Physics and Astrophysics, Faculty of Mathematics, Physics and Informatics, University of Gdańsk, 80-308 Gdańsk, Poland}
\author{Ujjwal Sen}
\affiliation{Harish-Chandra Research Institute, A CI of Homi Bhabha National Institute, Chhatnag Road, Jhunsi, Prayagraj 211 019, India}



\begin{abstract}


We consider a scenario where spatially separated observers share a genuinely multiparty entangled quantum state with each local observer possessing a single qubit. A particular qubit is acted upon by sequential and independent observers. We study the recycled detection of genuine multipartite entanglement  of  multiqubit states by any one of the sequential observers and the rest of the spatially separated parties. 
We show that for it is possible to sequentially detect genuine multiparty entanglement, arbitrarily many times for an arbitrarily large number of parties. Modified genuine multiparty entanglement witness operators for unsharp measurements by sequential observers are deduced, which are then employed to show that an arbitrary number of observers can sequentially detect genuine multisite entanglement of Greenberger-Horne-Zeilinger and cluster states of an arbitrary number of parties. Extensions to multiparty generalized Greenberger-Horne-Zeilinger states and a class of mixed states are also shown to be achievable.


\end{abstract}

\maketitle

\section{Introduction}
A characteristic trait of research in quantum mechanics is to reinforce the divergence from its 
classical counterpart. Quantum technologies have consistently exhibited advantages in fields such as cryptography \cite{bb84, Ekert91}, communication \cite{Wiesner92,Bennett93}, computation \cite{Harrow17}, and metrology \cite{Giovannetti11}. However, an in-depth understanding of the resources leading to these advantages is as yet incomplete. Among these resources, entanglement \cite{horodecki09,Guhne09} turns crucial in a plethora of quantum information processing and communication tasks (see e.g.~\cite{Ekert91, bb84,Wiesner92,Bennett93,Zukowski93, Bose98,Mayers98,Barrett05,Acin07, Colbeck12,Colbeck111,Colbeck112,Pironio10}). While bipartite entanglement has been more extensively explored, multipartite entanglement remains particularly challenging in terms of its preparation, control, and manipulation.
The fact that multipartite entanglement is not a mere extension of bipartite entanglement but rather a stronger form of non-classicality, results in new advantages in several quantum information processing tasks (see e.g.~\cite{Gottesman96,Gottesman99,Raussendorf01,Zhao04}). It is, therefore, 
interesting
to characterize \emph{genuine} multipartite entanglement (GME) so that we can precisely 
understand and utilize
its true potential.
However,  several difficulties, in particular an exponential increase in 
the number of nontrivial measurements
with the number of parties, make 
detecting GME a demanding task (see e.g.~\cite{Zuku03,Weinfurter04,GISIN19981,Scarani02,guhneex03}).

Recently, the notion of recyclability of non-classical correlation, by observing a sequential violation of the Clauser-Horne-Shimony-Holt (CHSH) Bell inequality \cite{Clauser69}, was introduced in \cite{Silva15}.  The setup is such that one half of the shared bipartite state is controlled by a single observer while the other half is measured and passed on to multiple sequential observers, in order to witness CHSH inequality violation multiple times. In  \cite{Brown20,srivastava21,pandit22}, it is shown that an $arbitrary$ number of observers can sequentially detect Bell nonlocality of 
shared
states and even entanglement of states which are not Bell nonlocal.  A number of 
works have been conducted within this (or 
slightly generalized \cite{Hall21,Cheng21,pandit22}) bipartite scenario \cite{Mal16,Bera18,Das19,Saha19,Maity20,sriv21,roy20,cabello20,ren21,Fei21,zhu21,Das21}. 
Several experimental works have also been performed in this area \cite{Schiavon17,Hu18,Vallone20,Choi20,Feng20}. 

Here we  consider
the sequential and independent observers setup in the  multipartite scenario with an arbitrary 
number of observers, and investigate whether GME can be 
shared 
in 
an arbitrary-length sequence of independent 
observers acting on a particular qubit.
We utilize a variation of the method proposed in~\cite{Toth05} - variation to the sequential case - to detect multipartite entanglement using only two \emph{local} measurement settings for Greenberger-Horne-Zeilinger (GHZ)~\cite{GHZ,Mermin,Dik99} and cluster states~\cite{Raussendorf01,Raussendorf201}.
%
We 
show that an arbitrary number of sequential observers can recycle GME starting from  $N$-partite GHZ and cluster states. Modified witness operators of GME, utilizing unsharp measurements performed by sequential observers, are presented, which serve the purpose of arbitrarily many detections. Arbitrary sequential detection of GME is also shown possible for generalized GHZ states and a class of mixed states.  Interestingly, the conditions obtained for arbitrary sequential detections are independent of $N$.

\section{Scenario}
Consider an $N$-qubit $(N\geq3)$ genuinely multiparty entangled state shared between $N$ spatially separated parties, with each party possessing a single qubit. The task is to witness genuine $N$-partite entanglement and pass on the post-measurement qubits to the next set of observers in the same laboratories. These succeeding sets of observers repeat the task, until the post-measurement state becomes biseparable. We refer to this sequential entanglement witness as the recycled detection of  genuinely multiparty entanglement. 
The sequential observers act independently of each other, i.e., each observer is unaware of the measurement settings employed and the outcomes obtained by the observers acting earlier.  
Also, the sequential observers act only on a particular qubit of the $N$-qubit state while the other observers act only once on their respective qubits (see Fig.~\ref{scenario}). Let us denote the observers by $O_1,O_2,\ldots,O^k_N$, where $O_1$ act on the first qubit, $O_2$ on the second, and so on. And the $N^{\text{th}}$ qubit is controlled by $n$ sequential observers, $O^k_N$, where $k=1,2\ldots,n$. The measurement strategy is that each of the $n$ sequential observers can detect the genuine multiparty entanglement of their shared state with the rest of the $N-1$ spatially separated observers. In this paper, we try to find the maximum value of $n$ depending on different choices of $\rho_1$, the initial \(N\)-qubit shared state.

Let $\rho_{k}$ be the state shared by the $k^{\text{th}}$ sequential observer with the rest of the $N-1$ spatially separated observers.
We 
deal with the genuine entanglement witness operators, which requires two dichotomic measurement settings by each observer. Without loss of generality, we assume that the sequential observers, i.e., observers possessing the $N^{\text{th}}$ qubit, perform their measurements before the measurements performed by the rest of the spatially separated parties, i.e., $O_1,O_2,\ldots,O_{N-1}$. 
Let $\{E^k_i,~\mathbb{I}_2-E^k_i\}$ represent the dichotomic measurement applied by the sequential observer, $O^k_N$, where $i=1,2$ represents two different settings and $\mathbb{I}_d$ is the identity operator on the $d$-dimensional Hilbert space. 
The number of settings is chosen as two, in hindsight, as will be clear in the ensuing cases considered. 
Since, the subsequent observers are independent of each other, the state shared by non-sequential parties and the $(k+1)^{\text{th}}$ sequential party can be expressed using the von Neumann-L\"uder's rule as~\cite{Busch}
\begin{eqnarray}\label{hoho}
    \rho_{k+1}=\frac{1}{2}\Big[\sum_{i=1,2}\mathbb{I}_{2^{N-1}}\otimes\sqrt{E^k_i}.~\rho_{k}.~\mathbb{I}_{2^{N-1}}\otimes\sqrt{E^k_i} \nonumber\\+\mathbb{I}_{2^{N-1}}\otimes\sqrt{\mathbb{I}_2-E^k_i}.~\rho_{k}.~\mathbb{I}_{2^{N-1}}\otimes\sqrt{\mathbb{I}_2-E^k_i}\Big].
\end{eqnarray}
The prefactor of $\frac{1}{2}$ is to account for the ``unbiasedness'' of the measurement settings applied by sequential observers.
The independence between different observers in the sequence requires the summation over the different measurement settings.
\begin{figure}
    \centering
    \includegraphics[width = \columnwidth]{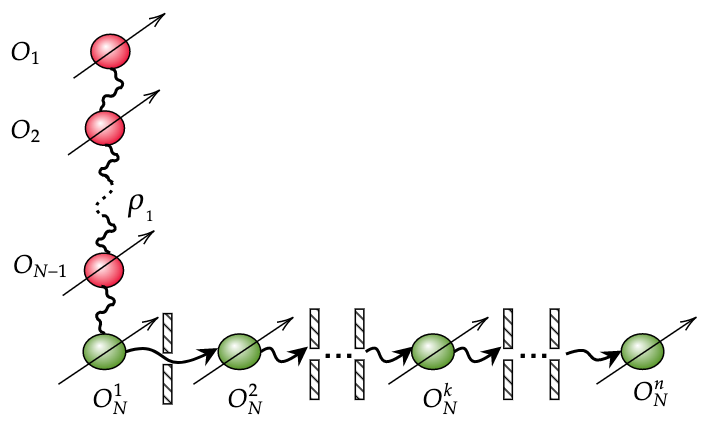}
    \caption{Scenario considered for recycled detection of GME. An $N$-partite state $\rho_1$ is initially shared between $N(\geq3)$ spatially separated observers $O_1,O_2,\ldots,O^1_N$.    The $N^{\text{th}}$ laboratory contains the \(N^{\text{th}}\) qubit, and is initially in possession of \(O^1_N\). The post-measured \(N^{\text{th}}\) qubit is then passed on to \(O^2_N\), the next independent observer in the \(N^{\text{th}}\) laboratory, and so on. Therefore, in the schematic depiction, the horizontal direction represents time, while the vertical one represents space
    %
    %
    The state 
    shared between observers $O_1,O_2,\ldots,O^k_N$ is denoted by \(\rho_k\), where $k=1,2,\ldots,n$. The task is to maximize $n$, depending on the choice of $\rho_1$, such that each $O^k_N$ detects GME with the rest of the $N-1$ spatially separated observers.
    }
    \label{scenario}
\end{figure}

\section{Recycling GME of GHZ states}
An $N$-party GHZ state is given by
\begin{equation}
    |\text{GHZ}_N\rangle=\frac{1}{\sqrt{2}}\left(|00\ldots0\rangle_N+|11\ldots1\rangle_N\right), \nonumber
\end{equation}
where $|0\rangle$ and $|1\rangle$ are the eigenstates of Pauli operator $\sigma_z$, with 1 and $-1$ as eigenvalues respectively.
A witness operator that detects genuine multiparty entanglement in the GHZ state is given by~\cite{Toth05}
\begin{equation}\label{khoyachand}
    W_{\text{GHZ}_N}=3\mathbb{I}-2\left[\frac{\mathbb{I}+S^{(\text{GHZ}_N)}_1}{2}+ \prod_{m=2}^N\frac{\mathbb{I}+S^{(\text{GHZ}_N)}_m}{2}\right],
\end{equation}
   where $\mathbb{I}$ is the identity operator on the $2^N$-dimensional Hilbert space, and the operators $S^{(\text{GHZ}_N)}_m$ are the generators of the stabilizer group for the GHZ state, i.e., $S^{(\text{GHZ})}_m|\text{GHZ}_N\rangle=|\text{GHZ}_N\rangle$ for $m=1,2,\ldots N$. The stabilizer operators can be defined as
   \begin{eqnarray}
       S^{(\text{GHZ}_N)}_1&=&\prod_{m=1}^N \sigma_x^{(m)}  \nonumber \\
       S^{(\text{GHZ}_N)}_m&=&\sigma^{(m-1)}_z  \sigma^{(m)}_z, ~~m=2,3,\ldots,N,
   \end{eqnarray}
   where the superscript over the operators indicate the qubit on which it acts. E.g., $\sigma^{(m-1)}_z  \sigma^{(m)}_z$ denotes the operator for which 
   $\sigma_z$ acts on the $(m-1)^{\text{th}}$ and on the $m^{\text{th}}$ qubits, whereas the identity operator acts on the rest of the qubits. Note that witness operators for genuine multiparty entanglement are defined as operators having non-negative expectation values for all biseparable states and having negative expectation value for at least one GME state~\cite{Guhne09}. 
   It is easy to see that the expectation value of $W_{\text{GHZ}_N}$ for the GHZ state is equal to $-1$.
   
   \subsection*{Measurement strategy: Modified witness operators for sequential observers}
   Evaluation of  expectation value of the witness operator, $W_{\text{GHZ}_N}$, requires application of two settings of measurement by each observer, viz., measurement in the eigenbases of $\sigma_z$ and $\sigma_x$.  
   Performing these projective measurements may answer affirmatively about  genuine multiparty entanglement of the underlying state, but that comes at the cost of disturbing the state a lot, such that the state becomes fully separable.  We shall see that it is possible to detect and still save (some) genuine multiparty entanglement 
   by performing an unsharp version of these projective measurements.  The idea is to disturb the underlying state the least such that the relevant information, about whether the state is genuine multipartite entangled, is still fetched out.
   We will see 
   that an arbitrary number of observers can sequentially recycle the genuine multiparty entanglement.
   
   In the measurement strategy adopted by sequential observers, we assume that the measurement setting, $\sigma_x$, is made unsharp, whereas the measurement setting corresponding to observable, $\sigma_z$, is kept sharp, i.e., the two outcomes corresponding to both the measurement settings are given by 
   \begin{eqnarray}\label{khulaAasma}
       E^k_{x(\pm)}&=&\frac{\mathbb{I}_2 \pm \lambda_k \sigma_x}{2}, ~~~0\leq\lambda_k\leq 1 \nonumber, \\
       E^k_{z(\pm)}&=&\frac{\mathbb{I}_2 \pm  \sigma_z}{2}.
     \end{eqnarray}
     Here, $\lambda_k$ is the sharpness parameter of the measurement corresponding to $\sigma_x$. $\lambda_k=1$ corresponds to the (corresponding) sharp measurement, whereas $\lambda_k = 0$ corresponds to the case of no measurement.  
     All the other spatially separated parties perform sharp projective measurements corresponding to $\sigma_x$ and $\sigma_z$.  The modified measurement by the $k^{\text{th}}$ sequential observer will lead to evaluation of the expectation value of the operator, $E^k_{x(+)}-E^k_{x(-)}=\lambda_k\sigma_x$, instead of $\sigma_x$. Thus the modified operator, corresponding to the operator $W_{\text{GHZ}_N}$, due to unsharp measurement becomes
     \begin{equation}\label{naya}
           W^k_{\text{GHZ}_N}=3\mathbb{I}-2\left[\frac{\mathbb{I}+\lambda_k S^{(\text{GHZ}_N)}_1}{2} +  \prod_{m=2}^{N}\frac{\mathbb{I}+S^{(\text{GHZ}_N)}_m}{2}\right], 
     \end{equation}
     But one needs to check whether $W^k_{\text{GHZ}_N}$ is a witness operator for genuinely multiparty entangled states, which we answer 
     affirmatively 
     in the following theorem.
     \begin{theorem}\label{sehjada}
     The operator  $W^k_{\text{GHZ}_N}$ is a witness operator for genuinely multiparty entangled states.
     \end{theorem}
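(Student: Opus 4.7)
The plan is to reduce the theorem to the known fact (from \cite{Toth05}) that $W_{\text{GHZ}_N}$ is a GME witness, by exploiting that the only change in $W^k_{\text{GHZ}_N}$ is the coefficient of a single stabilizer. First I would observe, directly from Eqs.~\eqref{khoyachand} and \eqref{naya}, the simple algebraic identity
\begin{equation}
W^k_{\text{GHZ}_N}=W_{\text{GHZ}_N}+(1-\lambda_k)\,S^{(\text{GHZ}_N)}_1,\nonumber
\end{equation}
so that for any state $\rho$, $\langle W^k_{\text{GHZ}_N}\rangle_\rho=\langle W_{\text{GHZ}_N}\rangle_\rho+(1-\lambda_k)\langle S^{(\text{GHZ}_N)}_1\rangle_\rho$. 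The witness property requires only that the left-hand side is non-negative on all biseparable states and negative on at least one GME state.

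My strategy is then to split into two cases depending on the sign of $\langle S^{(\text{GHZ}_N)}_1\rangle_\rho$ for a biseparable $\rho$. When $\langle S^{(\text{GHZ}_N)}_1\rangle_\rho\ge 0$, the correction $(1-\lambda_k)\langle S^{(\text{GHZ}_N)}_1\rangle_\rho$ is non-negative because $\lambda_k\in[0,1]$, and $\langle W_{\text{GHZ}_N}\rangle_\rho\ge 0$ by the original Tóth--Gühne result~\cite{Toth05}, so the sum is non-negative. When $\langle S^{(\text{GHZ}_N)}_1\rangle_\rho<0$, I will instead rewrite $W^k_{\text{GHZ}_N}=2\mathbb{I}-\lambda_k S^{(\text{GHZ}_N)}_1-2P$, where $P=\prod_{m=2}^N\frac{\mathbb{I}+S^{(\text{GHZ}_N)}_m}{2}$ is a projector (it is the product of commuting, mutually compatible projectors, so it projects onto the common $+1$-eigenspace of $S^{(\text{GHZ}_N)}_2,\ldots,S^{(\text{GHZ}_N)}_N$). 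Since $\langle P\rangle_\rho\le 1$ and $-\lambda_k\langle S^{(\text{GHZ}_N)}_1\rangle_\rho\ge 0$ in this regime, one gets $\langle W^k_{\text{GHZ}_N}\rangle_\rho\ge 2-2=0$; notably biseparability is not even used in this branch.

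To finish, I must exhibit a GME state on which the expectation is negative. The natural choice is $|\text{GHZ}_N\rangle$ itself: since it is a joint $+1$ eigenstate of every stabilizer generator, $\langle S^{(\text{GHZ}_N)}_1\rangle=1$ and $\langle P\rangle=1$, giving $\langle W^k_{\text{GHZ}_N}\rangle_{\text{GHZ}_N}=2-\lambda_k-2=-\lambda_k$, which is strictly negative for any $\lambda_k>0$. Combining the two cases above with this explicit example establishes that $W^k_{\text{GHZ}_N}$ is a genuine multipartite entanglement witness whenever the unsharpness parameter is non-trivial.

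I do not expect any real obstacle here; the argument is essentially a one-line perturbation of the Tóth--Gühne witness. The one thing to be careful about is to not claim more than the proof gives: the $\langle S^{(\text{GHZ}_N)}_1\rangle_\rho\ge 0$ branch genuinely uses biseparability (via the original witness inequality), whereas the other branch is unconditional; this asymmetry is worth flagging but does not affect the conclusion.
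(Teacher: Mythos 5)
Your proof is correct, but it takes a different route from the paper's. The paper proves the operator inequality $W^k_{\text{GHZ}_N}\geq \lambda_k W_{\text{GHZ}_N}$: it defines $X:=W^k_{\text{GHZ}_N}-\lambda_k W_{\text{GHZ}_N}$, notes that both operators are diagonal in the common eigenbasis of the stabilizer generators, and reads off that the eigenvalues of $X$ are $0$ and $2(1-\lambda_k)$ (indeed $X=2(1-\lambda_k)(\mathbb{I}-P)$ with your projector $P$), so non-negativity on biseparable states follows in one stroke from $\langle W_{\text{GHZ}_N}\rangle_{\rho_{BS}}\geq 0$ and $\lambda_k\geq 0$, with no case analysis. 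You instead use the decomposition $W^k_{\text{GHZ}_N}=W_{\text{GHZ}_N}+(1-\lambda_k)S^{(\text{GHZ}_N)}_1$ and split on the sign of $\langle S^{(\text{GHZ}_N)}_1\rangle_\rho$, invoking the Tóth--Gühne bound in one branch and the elementary bound $\langle P\rangle_\rho\leq 1$ in the other; this is equally valid and purely scalar, though slightly less economical than the paper's single positivity statement, which also yields the quantitative relation $\langle W^k_{\text{GHZ}_N}\rangle_\rho\geq\lambda_k\langle W_{\text{GHZ}_N}\rangle_\rho$ for every state. On the other hand, your proof is more complete on one point the paper leaves implicit: you explicitly verify the second defining property of a witness by computing $\langle W^k_{\text{GHZ}_N}\rangle_{|\text{GHZ}_N\rangle}=-\lambda_k$ (the paper only does this later, when deriving the condition $\lambda_1>0$), and you correctly flag that this requires $\lambda_k>0$, since at $\lambda_k=0$ the operator $2\mathbb{I}-2P$ is positive semidefinite and detects nothing.
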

     \begin{proof}
     It is known that  $W_{\text{GHZ}_N}$ is a genuine entanglement witness operator, i.e., $\langle W_{\text{GHZ}_N} \rangle_{\rho_{BS}} \geq 0$, where $\rho_{BS}$ is any biseparable state \cite{Toth05}.
     Let us define a operator $X:=W^k_{\text{GHZ}_N}-\lambda_k W_{\text{GHZ}_N}$. If we can show that $X$ is a semi-positive operator, then $\langle \lambda_k W^k_{\text{GHZ}_N} \rangle_{\rho_{BS}} \geq  \langle W_{\text{GHZ}_N} \rangle_{\rho_{BS}} \geq 0.$ Since $\lambda_k\geq 0$ which implies  $ \langle W^k_{\text{GHZ}_N} \rangle_{\rho_{BS}} \geq 0$, which proves our claim.\\
     Consider an orthonormal basis formed by the set of $2^N$ states satisfying, $$S^{(\text{GHZ})}_m|\Psi_N\rangle=\pm|\Psi_N\rangle, $$
     for $m=1,2,\ldots,N$. Clearly, $W^k_{\text{GHZ}_N}$ and $W_{\text{GHZ}_N}$, both the operators are diagonal in such a basis. Therefore $X$ is also diagonal in this basis. Now it is easy to see that the the distinct diagonal entries of $X$ consists of $0$ and $2(1-\lambda_k)$ which are non negative. Hence, it is proved that $X$ is a semi-positive operator and thus $W^k_{\text{GHZ}_N}$ is a genuine entanglement witness operator.   
     \end{proof}
     
    \subsection*{Arbitrarily many sequential observers can detect genuine entanglement of GHZ state}
    The state shared between $O_1,O_2,\ldots,O^k_N$ can be expressed in terms of the state shared by $O_1,O_2,\ldots,O^{k-1}_N$. Using Eq. \eqref{hoho} and the fact that the two measurement settings, applied by $(k-1)^{\text{th}}$ observers are given by Eq. \eqref{khulaAasma}, it can be shown that 
    \begin{widetext}
    \begin{equation}
        \rho_k=\frac{1}{2}\left[\frac{2+\sqrt{1-\lambda^2_{k-1}}}{2}~\rho_{k-1}+\frac{1}{2} \sigma^{(N)}_z.~ \rho_{k-1}.~\sigma^{(N)}_z + \frac{1-\sqrt{1-\lambda^2_{k-1}}}{2}\sigma^{(N)}_x.~ \rho_{k-1}.~\sigma^{(N)}_x\right].
    \end{equation}
    Now, one can show that 
    \begin{eqnarray}\label{Re}
        \text{Tr}[\rho_k.~\mathcal{A}^{(1,2,\ldots,N-1)}\sigma^{(N)}_z]&=&\left(\prod_{j=1}^k \frac{1+\sqrt{1-\lambda^2_j}}{2}\right)\text{Tr}[\rho_1.~\mathcal{A}^{(1,2,\ldots,N-1)}\sigma^{(N)}_z], \nonumber \\
         \text{Tr}[\rho_k.~\mathcal{A}^{(1,2,\ldots,N-1)}\sigma^{(N)}_x]&=&\frac{1}{2^{k-1}}\text{Tr}[\rho_1.~\mathcal{A}^{(1,2,\ldots,N-1)}\sigma^{(N)}_x],
    \end{eqnarray}
    where $\mathcal{A}^{(1,2,\ldots,N-1)}$ is an observable acting on first, second, \ldots, and $(N-1)^{\text{th}}$ qubits.
    \end{widetext}
    Consider the case when the parties, $O_1,O_2,\ldots,O^1_N$, share the GHZ state, i.e., $\rho_1=|\text{GHZ}_N\rangle\langle \text{GHZ}_N|$. The condition that the first sequential observer will detect GHZ state is given by the condition $\langle W^1_{\text{GHZ}_N}\rangle_{\rho_1} < 0$, which on simple evaluation yields,
    \begin{equation}\label{lahu}
        \lambda_1>0.
    \end{equation}
    The condition that the $k^{\text{th}}$ sequential observer can detect genuine entanglement is given by $\langle W^k_{\text{GHZ}_N}\rangle_{\rho_k} < 0$,  
    \begin{equation}\label{raqt}
    \implies \lambda_k > 2^{k-1}\left[1-\prod_{j=1}^{k-1}\frac{1+\sqrt{1-\lambda^2_j}}{2}\right].
    \end{equation}
    The above inequality is derived in Appendix \ref{Ap1}. Notice that this condition becomes independent of the size $N$.
    Thus, one can define the sequence, $\lambda_k$ for $k \in \mathbb{N}$ (the set of natural numbers), as,
    \begin{equation}\label{sa}
\lambda_k :=
\left\{
\begin{array}{cc}
      (1+\epsilon) 2^{k-1} \left[1-\prod_{j=1}^{k-1}\frac{1+\sqrt{1-\lambda^2_j}}{2}\right] &\mbox {if~}\lambda_{k-1}\in (0,1) \\\\
      {\rm \infty}, &{\rm otherwise},
\end{array}
\right.
\end{equation}
with $\lambda_1>0$, $\epsilon>0$, and $\lambda_k=\infty$ means that the $k^{th}$ observer will not be able to detect any genuine entanglement. Now we can state our main result on the recyclability of the genuine entanglement of the GHZ state.
\begin{theorem}\label{goonj}
An arbitrarily many sequential observers can detect the GME if the initial shared state is a multipartite GHZ state.
\end{theorem}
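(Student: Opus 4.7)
The plan is to show that for any $n \in \mathbb{N}$ one can choose $\lambda_1,\ldots,\lambda_n \in (0,1]$ making the detection inequality~\eqref{raqt} strict for every $k \leq n$. This is the natural reading of ``arbitrarily many observers''; a single infinite admissible sequence cannot exist, because the product $P_{k-1} := \prod_{j=1}^{k-1}\mu_j$ (with $\mu_j = (1+\sqrt{1-\lambda_j^2})/2$) is nonincreasing and strictly below $1$ once $\lambda_1 > 0$, so the right-hand side $2^{k-1}(1-P_{k-1})$ eventually exceeds $1$. I would first recast~\eqref{raqt} as $P_{k-1} > 1-\lambda_k/2^{k-1}$, making explicit that the task reduces to keeping $P_{k-1}$ exponentially close to $1$ over the window $k \leq n$.

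Second, using $\sqrt{1-x}\geq 1-x$ on $[0,1]$ one obtains $\mu_j \geq 1 - \lambda_j^2/2$, hence the Weierstrass-type bound $P_{k-1} \geq 1 - \tfrac{1}{2}\sum_{j=1}^{k-1}\lambda_j^2$, converting the multiplicative cascade into control of a sum of squares. An explicit admissible sequence is then produced by fixing $\delta>0$ and setting $\lambda_1 = \cdots = \lambda_{n-1} = \delta$: condition~\eqref{raqt} for each $k \leq n-1$ becomes $\delta > 2^{k-2}(k-1)\delta^2$, uniformly satisfied once $\delta < 1/((n-2)2^{n-3})$, while for $k = n$ it becomes $\lambda_n > 2^{n-2}(n-1)\delta^2$, compatible with some $\lambda_n \in (0,1]$ once $\delta < 1/\sqrt{(n-1)2^{n-2}}$. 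Both thresholds are positive for finite $n$, so a common admissible $\delta$ exists. An equivalent route, closer in spirit to~\eqref{sa}, is to observe that each iterate $\lambda_k$ of that recursion is a continuous function of $\lambda_1$ on the region where the preceding terms lie in $(0,1)$, with $\lambda_k = O(\lambda_1^2)$ for $k \geq 2$; by continuity, shrinking $\lambda_1$ in an $n$-dependent way places every $\lambda_k$ with $k \leq n$ safely in $(0,1)$.

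The principal obstacle is the exponential factor $2^{k-1}$ in~\eqref{raqt}, which is what forces any valid sequence to depend on $n$ and prevents a universal construction across all $k$. What rescues the theorem on any finite window $k \leq n$ is the quadratic smallness of $1 - P_{k-1}$ in $\lambda_1$, which the factor $2^{k-1}$ can always be made to underperform by taking $\lambda_1$ small enough in an $n$-dependent way. This trade-off between the depth $n$ and the smallness of $\lambda_1$ is the heart of the argument.
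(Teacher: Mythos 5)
Your proposal is correct, and its core trade-off (detection threshold growing like $2^{k-1}$ versus a deficit $1-P_{k-1}$ that is quadratically small in the sharpness parameters) is exactly what powers the paper's argument; but your main construction is genuinely different in form. The paper works with the specific recursive sequence of Eq.~\eqref{sa}, which saturates \eqref{raqt} up to a factor $(1+\epsilon)$: it first shows the ratio $\lambda_n/\lambda_{n-1}>2$, so the sequence is strictly increasing and detection by the $n^{\text{th}}$ observer implies detection by all earlier ones, and then argues that $\lambda_k\to 0$ as $\lambda_1\to 0$ for each fixed $k$, so for any $n$ one can pick $\lambda_1$ ($n$-dependent) with $0<\lambda_1<\cdots<\lambda_n<1$. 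That is precisely your secondary ``continuity of the iterates, $\lambda_k=O(\lambda_1^2)$'' route. Your primary route instead fixes $\lambda_1=\cdots=\lambda_{n-1}=\delta$, uses $\sqrt{1-x}\geq 1-x$ and the Weierstrass product bound to get $P_{k-1}\geq 1-\tfrac{1}{2}\sum_{j<k}\lambda_j^2$, and derives explicit admissible thresholds $\delta<1/\big((n-2)2^{n-3}\big)$ and $\delta<1/\sqrt{(n-1)2^{n-2}}$; this buys a fully explicit, quantitative witness of the claim without needing the monotonicity lemma or a limiting argument. You also make explicit a point the paper leaves implicit: since $P_{k-1}\leq \mu_1<1$ once $\lambda_1>0$, no single infinite sequence can satisfy \eqref{raqt} for all $k$, so the theorem must be read as ``for every finite $n$ there is an ($n$-dependent) choice of sharpness parameters,'' which is indeed how the paper's proof functions even though its phrasing (``for any $n\in\mathbb{N}$'') obscures the dependence of $\lambda_1$ on $n$. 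Both arguments are sound; the paper's is shorter and ties directly to its defined sequence, yours is more self-contained and gives concrete parameter choices.
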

\begin{proof}
Let for some $n\in \mathbb{N}$, if $0<\lambda_n<1$, then
\begin{equation}
    \frac{\lambda_n}{\lambda_{n-1}}=2\frac{1-\prod_{j=1}^{n-1}\frac{1+\sqrt{1-\lambda^2_j}}{2}}{1-\prod_{j=1}^{n-2}\frac{1+\sqrt{1-\lambda^2_j}}{2}}>2,
\end{equation}
    which implies the sequence, $\lambda_k$, given in Eq. \eqref{sa}, is a positive and strictly increasing sequence. That means $0<\lambda_1<\lambda_2\ldots<\lambda_n<1$, i.e., if the $n^{\text{th}}$ observer can detect the genuine entanglement then all the earlier acting sequential observers can also detect the genuine entanglement.\\
    Another interesting property of this sequence is that as $\lambda_1 \to 0$, $\lambda_k \to 0$ for $k \in \mathbb{N}$. This imply  $0<\lambda_1<\lambda_2\ldots<\lambda_n<1$ for any $n \in \mathbb{N}$. 
\end{proof}

\subsection*{Recyclability using multipartite generalized GHZ and a class of mixed states}
    An N-party generalized GHZ state can be written as $|\psi\rangle=\sqrt{\alpha}|00\ldots0\rangle_N+\sqrt{1-\alpha}|11\ldots1\rangle_N)$ for $0< \alpha< 1$.
    Now, consider the case when the first sequential observer share a mixed state, i.e.,
    \begin{equation}
        \rho_1=p_1|\psi\rangle\langle\psi|+p_2|00\ldots0\rangle\langle00\ldots0|_N+p_3|11\ldots1\rangle\langle11\ldots1|_N, \nonumber
    \end{equation}
    where $p_1>0$, $p_2,p_3\geq0$ and $p_1+p_2+p_3=1$.
    Expectation value of the stabilizer operators appearing in the witness operator, $W^k_{\text{GHZ}_N}$, for this state are given as 
    \begin{eqnarray}
        \text{Tr}[S_1\rho_1]&=&2p_1\sqrt{\alpha(1-\alpha)}, \nonumber \\
        \text{Tr}[S_m\rho_1]&=&1, ~~\text{for~}m=2,3,\ldots,N.
    \end{eqnarray}
    Evaluation of $\langle W^1_{\text{GHZ}_N}\rangle_{\rho_1} < 0$ and $\langle W^k_{\text{GHZ}_N}\rangle_{\rho_k} < 0$ leads to the following inequalities,
    \begin{eqnarray}\label{ga}
         \lambda_1 &>& 0, \nonumber \\
        \lambda_k &>& \frac{2^{k-1}}{2p_1\sqrt{\alpha(1-\alpha)}}\left[1-\prod_{j=1}^{k-1}\frac{1+\sqrt{1-\lambda^2_j}}{2}\right].
    \end{eqnarray}
    Now, similar treatment, as for the case of GHZ state, can lead to show that it is possible to have $0<\lambda_n<1$ for $n\in\mathbb{N}$ given the condition \eqref{ga}, i.e., an arbitrary number of sequential observers can detect GME. 
   
\section{Recycling GME of cluster states}
An $N$-qubit cluster state $|C_N\rangle$ can be prepared by applying Ising chain type interaction $e^{i\pi (\frac{1-\sigma^{(m)}_z}{2}).(\frac{1-\sigma^{(m+1)}_z}{2})}$ on $|0'0'\ldots0'\rangle_N$, where $|0'\rangle$ is the eigenstate of $\sigma_x$ with eigenvalue 1. It can be witnessed by \cite{Toth05}
\begin{equation}\label{khoyachand2}
    W_{C_N}=3\mathbb{I}-2\left[\prod_{\text{even}~m}\frac{\mathbb{I}+S^{(C_N)}_m}{2}+ \prod_{\text{odd~m}}\frac{\mathbb{I}+S^{(C_N)}_m}{2}\right],
\end{equation}
where $S^{(C_N)}_m$ are stabilizer operators for the cluster state with eigenvalue, 1, i.e., $S^{(C_N)}_m|C_N\rangle=|C_N\rangle$ for $m=1,2,\ldots,N$. The stabilizer operators are given as 
 \begin{eqnarray}
       S^{(C_N)}_1&=& \sigma_x^{(1)} \sigma_z^{(2)}, \nonumber \\
       S^{(C_N)}_m&=&\sigma^{(m-1)}_z  \sigma^{(m)}_x \sigma^{(m+1)}_x, ~~m=2,3,\ldots,N-1, \nonumber \\
       S^{(C_N)}_N&=& \sigma_z^{(N-1)} \sigma_x^{(N)}.
   \end{eqnarray}
   The sequential observers adopts the same measurement strategy as given in Eq. \eqref{khulaAasma}, therefore the witness operator, $W_{C_N}$, gets modified and has the following form,
        \begin{eqnarray}\label{suraj}
         W^k_{C_N}=3\mathbb{I}-2\Big[\prod_{\text{even(odd)}~m}^{N-2}\frac{\mathbb{I}+S^{(C_N)}_m}{2}\frac{\mathbb{I}+\lambda_k S^{(C_N)}_N}{2}\nonumber \\
         + \prod_{\text{odd(even)~m}}^{N-1}\frac{\mathbb{I}+S^{(C_N)}_m}{2}\Big] 
  \end{eqnarray}
   for even(odd) $N$.
 In the next theorem, we prove that the operator,  $W^k_{C_N}$, is also a  genuine multipartite entanglement witness operator.
 \begin{theorem}
     The operator  $W^k_{C_N}$ is a witness operator for genuinely multiparty entangled states.
     \end{theorem}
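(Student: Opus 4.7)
The plan is to adapt the proof of Theorem~\ref{sehjada} almost verbatim. Since $W_{C_N}$ is already known to be a genuine multipartite entanglement witness~\cite{Toth05}, it suffices to show that $X := W^k_{C_N} - \lambda_k W_{C_N}$ is positive semidefinite: for then, on any biseparable state $\rho_{BS}$,
\begin{equation*}
\langle W^k_{C_N}\rangle_{\rho_{BS}} \geq \lambda_k\langle W_{C_N}\rangle_{\rho_{BS}} \geq 0,
\end{equation*}
while a direct substitution gives $\langle W^k_{C_N}\rangle_{|C_N\rangle}=-\lambda_k<0$ for $\lambda_k>0$, confirming that $W^k_{C_N}$ is negative on at least one GME state, namely the cluster state itself.

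To argue $X\geq 0$, I would exploit the fact that the cluster-state stabilizers $\{S^{(C_N)}_m\}$ mutually commute and therefore admit a common orthonormal eigenbasis $\{|\psi\rangle\}$ with $S^{(C_N)}_m|\psi\rangle = s_m^\psi|\psi\rangle$ and $s_m^\psi\in\{+1,-1\}$. In this basis $W_{C_N}$, $W^k_{C_N}$, and hence $X$, are all simultaneously diagonal, so it is enough to verify that every diagonal entry of $X$ is non-negative. Let $A$ denote the parity class of indices that appears in the first product of $W_{C_N}$ (the one containing $N$) and $B$ the complementary class, and set $P_A := \prod_{m\in A,\,m\neq N}\tfrac{1+s_m^\psi}{2}$ and $P_B := \prod_{m\in B}\tfrac{1+s_m^\psi}{2}$, both necessarily in $\{0,1\}$.

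The key algebraic step is to split the modified $S_N$-factor as $\tfrac{1+\lambda_k s_N^\psi}{2}=\tfrac{1-\lambda_k}{2}+\lambda_k\,\tfrac{1+s_N^\psi}{2}$. Inserting this into the first product of $W^k_{C_N}$ and then subtracting $\lambda_k\,W_{C_N}$ makes the $\lambda_k$-dependent pieces telescope, so that each diagonal entry of $X$ reduces to a single expression of the form $(1-\lambda_k)\bigl[\,3 - P_A - 2 P_B\,\bigr]$. A short enumeration over the four possible values of $(P_A,P_B)\in\{0,1\}^2$ then shows this quantity lies in $\{0,1-\lambda_k,2(1-\lambda_k),3(1-\lambda_k)\}$, hence is non-negative, establishing $X\geq 0$.

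The argument is uniform in the parity of $N$: changing $N$'s parity merely swaps which product of $W_{C_N}$ contains $S_N$, and the telescoping above is insensitive to that choice. The only mildly delicate point I foresee is the bookkeeping needed to isolate the $(1-\lambda_k)$ factor cleanly before the enumeration; once that is done, no case-by-case analysis in the individual $s_m^\psi$'s is required, and the verification is essentially identical to the GHZ case of Theorem~\ref{sehjada}.
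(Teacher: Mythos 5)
Your proposal is correct and follows essentially the same route as the paper: subtract $\lambda_k W_{C_N}$, diagonalize in the common stabilizer eigenbasis, and check that the diagonal entries of the difference lie in $\{0,\,(1-\lambda_k),\,2(1-\lambda_k),\,3(1-\lambda_k)\}$, exactly as in the paper's proof (which mirrors Theorem~\ref{sehjada}). Your explicit algebraic splitting of the modified $S_N$ factor and the check that $\langle W^k_{C_N}\rangle_{|C_N\rangle}=-\lambda_k<0$ merely fill in details the paper leaves implicit.
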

    \begin{proof}
    This proof is similar to the proof of Theorem \ref{sehjada}. Let $Y=W^k_{C_N}-\lambda_kW_{C_N}$.
    If $Y$ is a semi-positive operator, then $\langle W^{k}_{C_N}\rangle_{\rho_{BS}} \geq \langle W_{C_N}\rangle_{\rho_{BS}} \geq 0$, and thus it will be proved that the modified operator is also witness for genuine entanglement.\\
    A basis can be created by the set of $2^N$ orthonormal states satisfying the following equations,
    $$ S^{(C_N)}_m|\Phi_N\rangle=\pm|\Phi_N\rangle,,$$
    for $m=1,2,\ldots,N$. Clearly, the operators, $W^k_{C_N}$, $W_{C_N}$, and  $Y$ are diagonal in this basis. The distinct diagonal entries are 0, $(1-\lambda_k)$, $2(1-\lambda_k)$, and $3(1-\lambda_k)$. Hence $Y$ is a semi-positive matrix, since $0\leq\lambda_k\leq1$. 
    \end{proof}
    The next step is to see how many observers can sequentially detect genuine entanglement of the cluster states using the modified witness operator, $ W^{k}_{C_N}$. 
    Now $\langle W^1_{C_N}\rangle_{\rho_1} < 0$ and $\langle W^k_{C_N}\rangle_{\rho_k} < 0$ gives,
      \begin{eqnarray}\label{pa}
         \lambda_1 &>& 0, \nonumber \\
        \lambda_k &>& 2^{k-1}\left[1-\prod_{j=1}^{k-1}\frac{1+\sqrt{1-\lambda^2_j}}{2}\right],
    \end{eqnarray}
    which is the same condition as given in Eqs. \eqref{lahu} and \eqref{raqt}. This condition is derived in Appendix \ref{Ap2}.
    Thus, this brings us to the same conclusion for the cluster states as GHZ state which is stated in the following theorem.
    \begin{theorem}
    An arbitrarily many sequential observers can detect the GME if the initial shared state is a multipartite cluster state.
    \end{theorem}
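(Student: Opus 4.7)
The plan is to reduce the statement to the already-proven Theorem \ref{goonj}, since the inequality \eqref{pa} governing admissibility of the sharpness parameters in the cluster-state setting is identical in functional form to the GHZ inequality \eqref{raqt}. Consequently the same recursive construction of $\{\lambda_k\}$ applies and the entire monotonicity argument carries over verbatim; only the bookkeeping needs to be re-labeled.

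First I would define a sequence $\{\lambda_k\}_{k\in\mathbb{N}}$ exactly as in Eq. \eqref{sa}, but using the threshold taken from the cluster-state inequality \eqref{pa}, with a small positive initial value $\lambda_1 > 0$ and a small $\epsilon > 0$. Next, under the assumption $0 < \lambda_n < 1$, I would verify by direct computation that the ratio
\begin{equation*}
\frac{\lambda_n}{\lambda_{n-1}} = 2 \cdot \frac{1 - \prod_{j=1}^{n-1}\frac{1+\sqrt{1-\lambda_j^2}}{2}}{1 - \prod_{j=1}^{n-2}\frac{1+\sqrt{1-\lambda_j^2}}{2}} > 2,
\end{equation*}
so that $\{\lambda_k\}$ is strictly increasing as long as it remains in $(0,1)$.

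Finally, I would observe that the recursion defines $\lambda_k$ as a continuous function of $\lambda_1$ which vanishes as $\lambda_1 \to 0$. Hence, for any prescribed $n \in \mathbb{N}$, one can select $\lambda_1$ sufficiently small to ensure $0 < \lambda_1 < \lambda_2 < \cdots < \lambda_n < 1$. By \eqref{pa}, this guarantees that each of the first $n$ sequential observers detects GME of the cluster state, and since $n$ is arbitrary, the theorem follows.

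The main obstacle I foresee is structural rather than conceptual: because \eqref{pa} is derived in Appendix \ref{Ap2} using the cluster-state stabilizer structure, which differs substantially from the GHZ stabilizer structure (in particular, the witness \eqref{suraj} splits differently over even and odd $m$, and the relevant traces under the evolution \eqref{Re} must be tracked carefully), one must verify that all those contributions collapse to yield the \emph{same} inequality threshold. Once this identification is accepted, unlimited recyclability follows with no further work beyond the argument of Theorem \ref{goonj}.
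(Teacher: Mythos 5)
Your proposal is correct and follows essentially the same route as the paper, whose proof of this theorem simply observes that the cluster-state condition \eqref{pa} coincides with the GHZ condition \eqref{raqt} and then invokes the argument of Theorem \ref{goonj} verbatim. The only extra step you flag---checking that the cluster-state stabilizer bookkeeping in Appendix \ref{Ap2} indeed collapses to the same threshold---is precisely what the paper establishes there, so no genuinely new argument is needed.
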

    \begin{proof}
    The proof is same as the proof of Theorem \ref{goonj}.
    \end{proof}
   \section{Conclusion}
    Sequential detection of Bell nonlocality and bipartite entanglement has been a topic of recent studies \cite{Silva15,Brown20,srivastava21,Hall21,Cheng21,pandit22,Mal16,Bera18,Das19,Saha19,Maity20,sriv21,roy20,cabello20,ren21,Fei21,zhu21,Das21}.
    It has been shown that there exist 
    states for which Bell nonlocality or entanglement 
    can be detected by arbitrarily many sequential observers. 
    These analyses not only report about the fundamental property of  limits of recyclability of resources like Bell nonlocality and entanglement, but also point to resource optimization limits in situations where state preparation is costly~\cite{Brown20}. 
    
    Multiparty quantum systems can be entangled in many ways. An extreme form is when they are genuinely multiparty entangled, in which case, they cannot be expressed as a convex combination of biseparable states, which are not necessarily all separable in the same bipartition)~\cite{Guhne09}. 
    In this paper, we studied the detection of genuine multiparty entanglement by sequential and independent observers. We considered a scenario where $N\geq3$ spatially separated local observers share $N$-qubit genuinely multisite entangled states, with each local observer possessing a single qubit. A particular qubit (among the \(N\)) is acted upon by sequential and independent observers. 
    %
    The possibility of increasing the number of sequential detections of genuine multisite entanglement by sequential observers is reached by using unsharp measurements. We presented  modified genuine multiparty entanglement witness operators corresponding to unsharp measurements. These witnesses were then used to show that arbitrarily many observers at a single party can sequentially detect the genuine multipartite entanglement of  Greenberger-Horne-Zeilinger and cluster states of an arbitrary number of parties. Interestingly, the conditions obtained for arbitrary sequential detection are independent of $N$. This result was also extended to multiparty generalized Greenberger-Horne-Zeilinger states and a class of mixed states.  
    It is important to mention that whether W~\cite{Dur00} or other genuinely multiparty entangled states will also exhibit arbitrary times recyclability of genuine multiparty entanglement remains open. 
   
   \section*{Acknowledgement}
   The research of CS was supported in part by the INFOSYS scholarship.~MP acknowledges the NCN (Poland) grant (grant number 2017/26/E/ST2/01008). The authors from Harish-Chandra Research Institute acknowledge partial support from the Department of Science and Technology, Government of India through the QuEST grant (grant number DST/ICPS/QUST/Theme-3/2019/120).

   \appendix
   
   \section{Condition for the sequential detectibility of the genuine entanglement of the GHZ state}\label{Ap1}
   
   The witness operator in Eq. \eqref{naya} used by $k^{\text{th}}$ sequential observer and the rest of the spatially separated parties can be written as,
   \begin{eqnarray}
       W^k_{\text{GHZ}_N}&=&3\mathbb{I}-2\Big[\frac{\mathbb{I}+\lambda_k T_1(\sigma^{(N)}_x)}{2}\nonumber \\&+&\frac{1}{2^{N-1}}\left( T_{2^{N-2}}(\mathbb{I}_2^{(N)}) + T_{2^{N-2}}(\sigma_z^{(N)}) \right)\Big], \nonumber
   \end{eqnarray}
   where $T_q(\mathcal{O}^{(N)})$ denotes the sum of $q$ terms such that the operator $\mathcal{O}^{(N)}$ acts on the $N^{\text{th}}$ qubit. This implies that it's expectation value in the state $\rho_k$, provided $\rho_1=|\text{GHZ}_N\rangle\langle\text{GHZ}_N|$, is
   \begin{equation}
       \langle W^k_{\text{GHZ}_N} \rangle_{\rho_k}=3-2\left[\frac{1+\lambda_k\frac{1}{2^{k-1}}}{2}+\frac{1+\prod_{j=1}^{k-1}\frac{1+\sqrt{1-\lambda_j^2}}{2}}{2}\right], \nonumber
   \end{equation} 
   using the relations in Eq. \eqref{Re} and the fact that $S^{\text{GHZ}_N}_m|\text{GHZ}_N\rangle=|\text{GHZ}_N\rangle$ for $m=1,2,\ldots,N$.
   Thus, $\langle W^k_{\text{GHZ}_N} \rangle_{\rho_k} < 0$ implies the condition \eqref{raqt}.

   \section{Condition for the sequential detectibility of the genuine entanglement of the cluster state}\label{Ap2}
   The witness operator in Eq. \eqref{suraj} can be written as,
   \begin{eqnarray}
        &W^k_{C_N}&=3\mathbb{I}-\frac{2}{2^{\frac{N'}{2}}}\Big[T_{2^{\frac{N'}{2}}}(\mathbb{I}_2^{(N)})\nonumber \\
        &+&T_{2^{\frac{N'-2}{2}}}(\sigma_z^{(N)})+\lambda_k T_{2^{\frac{N'-2}{2}}}(\sigma_x^{(N)})\Big], \nonumber 
   \end{eqnarray}
   where $N'=N$ for even $N$ and $N'=N+1$ for odd $N$. Now its expectation value for the state $\rho_k$, provided $\rho_1 = | C_N \rangle \langle C_N |$, is given as
   \begin{equation}
        \langle W^k_{C_N} \rangle_{\rho_k}=\left[1-\prod_{j=1}^{k-1}\frac{1+\sqrt{1-\lambda_j^2}}{2}-\frac{\lambda_k}{2^{k-1}}\right], \nonumber
   \end{equation}
   using the relations in Eq. \eqref{Re} and the fact that $S^{C_N}_m|C_N\rangle=|C_N\rangle$ for $m=1,2,\ldots,N$.
   and $\langle W^k_{C_N} \rangle_{\rho_k} < 0$ leads to the same condition given in \eqref{pa}.
\bibliography{main}
\end{document}